\newtheorem{definition}{Definition}
\newtheorem{remark}[definition]{Remark}
\newtheorem{lemma}[definition]{Lemma}
\newtheorem{theorem}[definition]{Theorem}
\newtheorem{proposition}[definition]{Proposition}
\newtheorem{proof}[definition]{Proof}
\newtheorem{example}[definition]{Example}
\begin{document}

\title{\ \\ \LARGE\bf Chaos of Protein Folding\thanks{Christophe Guyeux and Jacques M. Bahi
are with the Computer Science Laboratory LIFC, University of Franche-Comt\'e. Nathalie C\^ot\'e is a student in master of biology, same university.
16, route de Gray - 25030 Besan\c con, France (phone: +33 381
666948; email: \{christophe.guyeux, jacques.bahi\}@univ-fcomte.fr,nathalie.cote02@edu.univ-fcomte.fr).}}

\author{Jacques M. Bahi, Nathalie C\^ot\'e, and Christophe Guyeux*\thanks{* Authors in alphabetic order.}}


\maketitle

\begin{abstract}
As protein folding is a NP-complete problem, artificial intelligence tools like neural networks and genetic algorithms are used to attempt to predict the 3D shape of an amino acids sequence.
Underlying these attempts, it is supposed that this folding process is predictable.
However, to the best of our knowledge, this important assumption has been neither proven, nor studied. 
In this paper the topological dynamic of protein folding is evaluated.
It is mathematically established that protein folding in 2D hydrophobic-hydrophilic (HP) square lattice model is chaotic as defined by Devaney.
Consequences for both structure prediction and biology are then outlined.
\end{abstract}


\section{Introduction}

Proteins, formed by a string of amino acids folding into a specific tridimentional shape, carry out the majority of functionality within an organism.
However, simulating perfectly the folding processes or molecular dynamic occurring in biology nature is indeed infeasible, due to the following reasons.
Firstly, the forces involved in the stability of the protein conformation are currently not modeled with enough accuracy \cite{Hoque09}.
Indeed, we can even wonder if it is so realistic to hope finding one day an accurate model for this problem.
Secondly, due to an astronomically large number of possible 3D protein structures for a corresponding primary sequence of amino acids \cite{Hoque09}: the computation capability required even for handling a moderately-sized folding transition exceeds drastically the capacity of the most powerful computers around the world.

Consequently, proteins structures are not exactly computed, but they are predicted. 
As this Protein Structure Prediction (PSP) is a NP-complete problem \cite{Crescenzi98}, prediction for optimal protein structures is principally performed using computational intelligence approaches such as genetic algorithms, ant colonies \cite{Shmygelska2005Feb}, particle swarm, or neural networks.
Models of various resolutions are applied too, to tackle with the complexity of this problem.
In low resolution models, atoms into the same amino acid can for instance be considered as a same entity.
These low resolutions models are often used first to predict the backbone of the 3D conformation.
Then, high resolution models come next for further exploration. 
Such a prediction strategy is commonly used in PSP softwares like ROSETTA \cite{Bonneau01} or TASSER.

In this paper, we demonstrate that protein folding is indeed unpredictable, that is, it is chaotic according to Devaney.
This well-known topological notion for a chaotic behavior is one of the most established mathematical definition of unpredictability for dynamical systems.
This proof has been achieved in the framework of protein structure prediction for a 2D hydrophobic-hydrophilic (HP) lattice model \cite{Berger98}.
This popular lattice model with low resolution focuses only upon hydrophobicity by separating the amino acids into two sets: hydrophobic (H) and hydrophilic (or polar P) \cite{Dill1985}.
Numerous variations are proposed in the general HP model: 2D or 3D lattices, with square, cubic, triangular, or face-centered-cube shapes...
In our demonstration, we have chosen a 2D square lattice for easy understanding.
However the process still remains general and can be applied to high resolution models by using a more refined formulation. 

After having established the chaotic behavior of the folding dynamic by using two different proofs, we will outline the consequences of this fact.
More precisely, we will focus on the following questions.
Firstly, is it possible to predict 3D protein structures using artificial intelligence tools if the folding process is chaotic ? 
In other words, are genetic algorithms, neural networks, and so on, able to predict chaotic behaviors, at least as defined by Devaney (in this paper, we will only study neural networks) ?


The remainder of this paper is organized as follows. 
In Section \ref{Sec:basic recalls}, basic notations and terminologies concerning both HP-model and Devaney's topological chaos are recalled. 
Then in the next two sections the proofs of the chaotic behavior of protein folding dynamic are established.
The first proof is directly realized in the Devaney's context whereas the second one uses a previously proven result concerning chaotic iterations \cite{guyeux09}.
Consequences of this unpredictable behavior are outlined in Section \ref{Sec:Consequences}.
Among other things, it is regarded whether chaotic behaviors are harder to predict than ``normal'' behaviors or not.
Additionally, reasons explaining why a chaotic behavior unexpectedly leads to approximately one thousand categories of folds are proposed.
This paper ends by a conclusion section, in which our contribution is summarized and intended future work is presented.

\section{Basic Recalls}
\label{Sec:basic recalls}

In the sequel $S^{n}$ denotes the $n^{th}$ term of a sequence $S$ and $V_{i}$ the $i^{th}$ component of a vector $V$. 
$f^{k}=f\circ...\circ f$ is the $k^{th}$ composition of a function $f$.
Finally, the following notation is used: $\llbracket1;N\rrbracket =\{1,2,\hdots,N\}$.

\subsection{2D hydrophilic-hydrophobic (HP) model}

\subsubsection{The HP model}

In the HP model, hydrophobic interactions are supposed to dominate protein folding.
This model was formerly introduced by Dill, who consider in \cite{Dill1985} that the protein core freeing up energy is formed by hydrophobic amino acids, whereas hydrophilic amino acids tend to move in the outer surface due to their affinity with the solvent (see Figure \ref{fig:hpmodel}).
 
A protein conformation is then a self-avoiding walk (SAW) on a 2D or 3D lattice such that its energy $E$, depending on topological neighboring contacts between hydrophobic amino acids which are not contiguous in the primary structure, is minimal.
In other words, for an amino-acid sequence $P$ of length $\mathsf{N}$ and for the set $\mathcal{C}(P)$ of all SAW conformations of $P$, the chosen conformation will be $c^* = min \left\{E(c) \big/ c \in \mathcal{C}(P)\right\}$ \cite{Shmygelska05}.
In that context and for a conformation $c$, $E(c)=-q$ where $q$ is equal to the number of topological hydrophobic neighbors.
For example, $E(c)=-5$ in Figure \ref{fig:hpmodel}.

\begin{figure}[t]
\centering
\includegraphics[width=2.in]{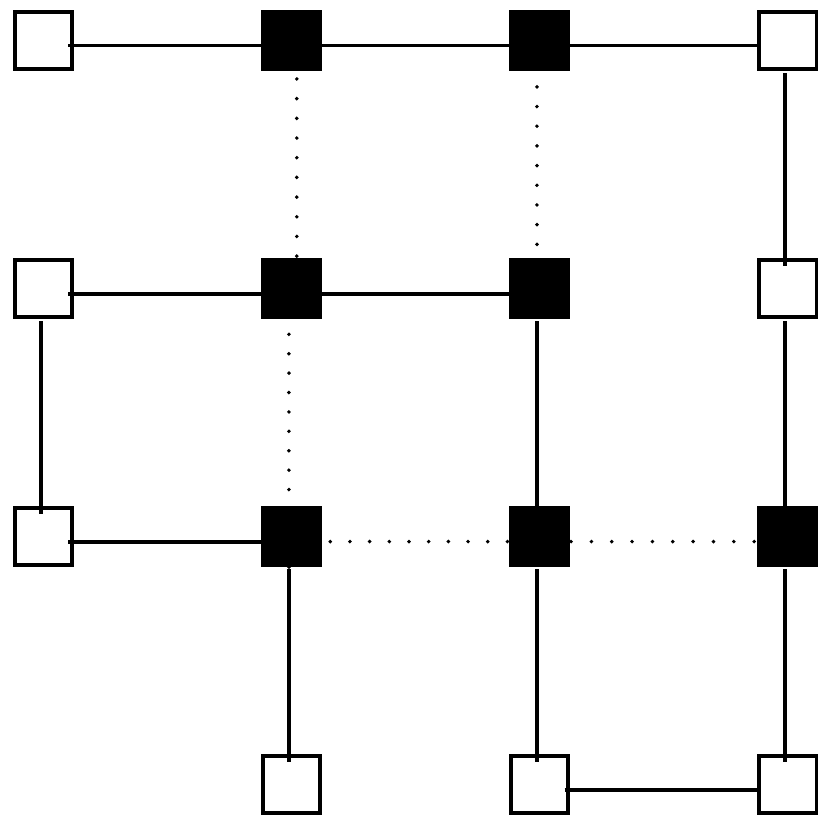}
\caption{Hydrophilic-hydrophobic (HP) model (black squares are hydrophobic residues).}
\label{fig:hpmodel}
\end{figure}

\subsubsection{Protein Encoding}

Additionally to the direct coordinate presentation, at least two other isomorphic encoding strategies for HP model are possible: relative encoding and absolute encoding.
In relative encoding \cite{Hoque09}, 
the move direction is defined relative to the direction of the previous move.
Alternatively, in absolute encoding \cite{Backofen99algorithmicapproach}, 
which is the encoding chosen in this paper, 
the direct coordinate presentation is replaced by letters or numbers representing directions with respect to the lattice structure. 

For a 2D absolute encoding, the permitted moves are: forward $\rightarrow$ (denoted by 1), backward $\leftarrow$ (2), up $\uparrow$ (3), and down $\downarrow$ (4).
A 2D conformation $c$ of $\mathsf{N}$ residues for a protein $P$ could then be $c \in \llbracket 1; 4 \rrbracket^{\mathsf{N}-2}$, as the initial move is always forward (1) \cite{Hoque09}.
For example, in Figure \ref{fig:hpmodel}, the 2D absolute encoding is (1)1144423322414 (starting from the upper left corner).
In that situation, $3^{\mathsf{N}-2}$ conformations are possible when considering $\mathsf{N}$ residues, even if some of them are invalid due to the SAW requirement.

\subsection{Devaney's Chaotic Dynamical Systems}
\label{subsection:Devaney}
\vspace{-4pt}

Let us now introduce the notion of chaos used in this document.
Consider a topological space $(\mathcal{X},\tau)$ and a continuous function $f$ on $\mathcal{X}$.

\begin{definition}
 $f$ is said to be \emph{topologically transitive} if, for any pair
 of open sets $U,V \subset \mathcal{X}$, there exists $k>0$ such that
 $f^k(U) \cap V \neq \emptyset$.
\end{definition}

\begin{definition}
 An element (a point) $x$ is a \emph{periodic element} (point) for
 $f$ of period $n\in \mathds{N}^*,$ if $f^{n}(x)=x$.
\end{definition}

\begin{definition}
 $f$ is said to be \emph{regular} on $(\mathcal{X}, \tau)$ if the set
 of periodic points for $f$ is dense in $\mathcal{X}$: for any point
 $x$ in $\mathcal{X}$, any neighborhood of $x$ contains at least one
 periodic point.
\end{definition}

\begin{definition}
 $f$ is said to be \emph{chaotic} on $(\mathcal{X},\tau)$ if $f$ is
 regular and topologically transitive.
\end{definition}

The chaos property is related to the notion of ``sensitivity'', defined on a metric space $(\mathcal{X},d)$ by:

\begin{definition}
 \label{sensitivity} $f$ has \emph{sensitive dependence on initial conditions}
 if there exists $\delta >0$ such that, for any $x\in \mathcal{X}$
 and any neighborhood $V$ of $x$, there exist $y\in V$ and
 $n \geq 0$ such that $d\left(f^{n}(x), f^{n}(y)\right) >\delta
 $.
\end{definition}

Indeed, Banks \emph{et al.} have proven in~\cite{Banks92} that when
$f$ is chaotic and $(\mathcal{X}, d)$ is a metric space, then $f$ has
the property of sensitive dependence on initial conditions (this
property was formerly an element of the definition of chaos). To sum
up, quoting Devaney in~\cite{Devaney}, a chaotic dynamical system ``is
unpredictable because of the sensitive dependence on initial
conditions. It cannot be broken down or simplified into two subsystems
which do not interact because of topological transitivity. And in the
midst of this random behavior, we nevertheless have an element of
regularity''. Fundamentally different behaviors are consequently
possible and occur in an unpredictable way.

\section{Protein Folding is Chaotic}
\label{sec:HP=Chaos}

We will now give a first proof of the chaotic behavior of the protein folding dynamic.

\subsection{Initial Premises}

Let us firstly introduce the preliminaries of our approach.

The primary structure of a given protein $p$ with $\mathsf{N}+1$ residues is coded by $1 1 \hdots 1$ ($\mathsf{N}$ times) in absolute encoding. 
Its final 2D conformation has an absolute encoding equal to $1 C_1^* \hdots C_{\mathsf{N}-1}^*$, where $\forall i, C_i^* \in \llbracket 1, 4 \rrbracket$, is such that $E(C^*) = min \left\{E(C) \big/ C \in \mathcal{C}(p)\right\}$.
This final conformation depends on the repartition of hydrophilic and hydrophobic amino acids in the initial sequence.

Moreover, we suppose that, if the residue number $n+1$ is forward the residue number $n$ in absolute encoding ($\rightarrow$) and if a fold occurs after $n$, then the forward move can only by changed into up ($\uparrow$) or down ($\downarrow$).
That means, in our simplistic model, only rotations of $+\frac{\pi}{2}$ or $-\frac{\pi}{2}$ are possible.

Obviously, for a given residue that is supposed to be updated, only one of the two possibilities below can appear for its absolute move during a fold:
\begin{itemize}
\item $1 \longmapsto 4, 4 \longmapsto 2, 2 \longmapsto 3,$ or $ 3 \longmapsto 1$ for a fold in the clockwise direction, or
\item $4 \longmapsto 1, 2 \longmapsto 4, 3 \longmapsto 2,$ or $1 \longmapsto 3$ for an anticlockwise. 
\end{itemize}

This fact leads to the following definition:

\begin{definition}
The \emph{clockwise fold function} is the function $f: \llbracket 1;4 \rrbracket \longrightarrow \llbracket 1;4 \rrbracket$ defined by:
$$
\begin{array}{crcl}
 & 1 & \longmapsto & 4 \\
 & 2 & \longmapsto & 3 \\
 & 3 & \longmapsto & 1 \\
 & 4 & \longmapsto & 2 \\
\end{array}
$$
\end{definition}

Obviously the anticlockwise fold function is $f^{-1}$. 

Thus at the $n^{th}$ folding time, a residue $k$ is chosen and its absolute move is changed by using either $f$ or $f^{-1}$. As a consequence, all of the absolute moves must be updated from the coordinate $k$ until the last one $\mathsf{N}$ by using the same folding function.

\begin{example}
\label{ex1}
If the current conformation is $C=111444$, i.e., $\rightarrow\rightarrow\rightarrow\downarrow\downarrow\downarrow$, and if the third residue is chosen to fold by a rotation of $-\frac{\pi}{2}$ (mapping $f$), thus the new conformation will be:
$$(C_1,C_2,f(C_3),f(C_4),f(C_5),f(C_6)) = (1,1,4,2,2,2).$$
\noindent That is, $\rightarrow\rightarrow\downarrow\leftarrow\leftarrow\leftarrow$.
\end{example}

These considerations lead to the formalization of the next section.

\subsection{Formalization and Notations}

Let $\mathsf{N}+1$ be a fixed number of amino acids, where $\in\mathds{N}^*$. 
We define $$\mathcal{X}=\llbracket 1; 4 \rrbracket^\mathsf{N}\times \llbracket -\mathsf{N};\mathsf{N} \rrbracket^\mathds{N}$$ as the phase space of the protein folding process.
An element $X=(C,F)$ of this dynamical folding space is constituted by:
\begin{itemize}
\item A conformation of the $\mathsf{N}+1$ residues in absolute encoding: $C=(C_1,\hdots, C_\mathsf{N}) \in \llbracket 1; 4 \rrbracket^\mathsf{N}$.
\item A sequence $F \in \llbracket -\mathsf{N} ; \mathsf{N} \rrbracket^\mathds{N}$ of future folds, depending on hydrophobicity, such that when $F_i \in \llbracket -\mathsf{N}; \mathsf{N} \rrbracket$ is $k$, it means that it occurs:
\begin{itemize}
\item a fold after the $k-$th residue by a rotation of $-\frac{\pi}{2}$ (mapping $f$) at the $i-$th step, if $k = F_i >0$,
\item no fold at time $i$ if $k=0$,
\item a fold after the $|k|-$th residue by a rotation of $\frac{\pi}{2}$ (\emph{i.e.}, $f^{-1}$) at the $i-$th time, if $k<0$.
\end{itemize}
\end{itemize}
On this phase space, the protein folding dynamic can be formalized as follows.

Denote by $i$ the map that transforms a folding sequence into its first term (the first folding operation): 
$$
\begin{array}{lccl}
i:& \llbracket -\mathsf{N};\mathsf{N} \rrbracket^\mathds{N} &\longrightarrow& \llbracket -\mathsf{N};\mathsf{N} \rrbracket,\\
& F &\longmapsto & F^0,
\end{array}$$
by $\sigma$ the shift function over $\llbracket -\mathsf{N};\mathsf{N} \rrbracket^\mathds{N}$, that is to say, 
$$
\begin{array}{lccl}
\sigma :& \llbracket -\mathsf{N};\mathsf{N} \rrbracket^\mathds{N} &\longrightarrow& \llbracket -\mathsf{N};\mathsf{N} \rrbracket^\mathds{N},\\
& \left(F^k\right)_{k \in \mathds{N}} &\longmapsto & \left(F^{k+1}\right)_{k \in \mathds{N}},
\end{array}$$
\noindent and by $sign$ the function:
$$
sign(x) = \left\{
\begin{array}{ll}
1 & \textrm{if } x>0,\\
0 & \textrm{if } x=0,\\
-1 & \textrm{else.}
\end{array}
\right.
$$

The shift function removes the first folding operation from the sequence after it has been achieved once.

Consider now the map $G:\mathcal{X} \to \mathcal{X}$ defined by:
$$G\left((C,F)\right) = \left( f_{i(F)}(C);\sigma(F)\right)$$
\noindent where $\forall k \in \llbracket -\mathsf{N};\mathsf{N} \rrbracket$, $f_k: \llbracket 1;4 \rrbracket^\mathsf{N} \to \llbracket 1;4 \rrbracket^\mathsf{N}$ is defined by: 
\begin{flushleft}
$f_k(C_1, ..., C_\mathsf{N}) =$
\end{flushleft}
\begin{flushright}
$ (C_1,... ,C_{|k|-1}, f^{sign(k)}(C_{|k|}),...,f^{sign(k)}(C_\mathsf{N})).$
\end{flushright}

Thus the folding process of a protein $P$ in the 2D HP square lattice model, with initial conformation equal to $(1,1, \hdots, 1)$ in absolute encoding, and a folding sequence equal to $(F^i)_{i \in \mathds{N}}$ provided by hydrophobic interactions, is defined by the following dynamical system over $\mathcal{X}$:
$$
\left\{
\begin{array}{l}
X^0 = ((1,1,\hdots,1);F)\\
X^{n+1} = G(X^n), \forall n \in \mathds{N}.
\end{array}
\right.
$$

In other words, at each step $n$, if $X^n=(C,F)$, we take the first folding operation to realize, that is $i(F) = F^0 \in \llbracket -\mathsf{N};\mathsf{N} \rrbracket$, we update the current conformation $C$ by rotating all of the residues coming after the $|i(F)|-$th one, which means that we replace the conformation $C$ with $f_{i(F)}(C)$.
Lastly, we remove this rotation (the first term $F^0$) from the folding sequence $F$: thus $F$ becomes $\sigma(F)$.

\begin{example}
Let us reconsider example \ref{ex1}. 
One iteration of its dynamical folding process can be described in this formalization by:
$\left((1,1,4,2,2,2),(F^1,F^2, \hdots)\right) =$ $G\left((1,1,1,4,4,4),(+3,F^1,F^2, \hdots)\right).$
\end{example}

\begin{remark}
A protein $P$ that has finished to fold, if such a protein exists, has the form $(C;(0,0,0,\hdots))$, where $C$ is the final 2D structure of $P$.
\end{remark}

\begin{remark}
Such a formalization allows the study of proteins that never stop to fold, for instance due to their never ended interactions with the environment.
\end{remark}

\subsection{A Metric for the Folding Process}

We define a metric $d$ over $\mathcal{X}=\llbracket 1;4 \rrbracket^\mathsf{N} \times \llbracket -\mathsf{N};\mathsf{N} \rrbracket^\mathds{N}$ by:
$$\displaystyle{d(X, \check{X}) = d_C(C, \check{C}) + d_F (F, \check{F}).}$$
\noindent where $\delta(a,b) =0$ if $a=b$, else $\delta(a,b)=1$, and
$$
\left\{
\begin{array}{ll}
d_C(C, \check{C}) = & \displaystyle{\sum_{k=1}^\mathsf{N} \delta(C_k,\check{C}_k) 2^{\mathsf{N}-k}} \\
d_F (F, \check{F}) = & \displaystyle{\dfrac{9}{2 \mathsf{N}} \sum_{k=1}^\infty \dfrac{|F^k-\check{F}^k|}{10^k}}
\end{array}
\right.$$

This new distance for the dynamical description of the protein folding process in 2D HP square lattice model can be justified as follows.

The integral part of the distance between two points $X=(C,F)$ and $\check{X}=(\check{C},\check{F})$ of $\mathcal{X}$ measures the differences between the current 2D structures of $X$ and $\check{X}$. 
More precisely, if $d_C(C,\check{C})$ is in $\llbracket 2^k,2^{k+1} \rrbracket$, then the first $k$ terms in the conformations $C$ and $C'$ (absolute encoding) are equal, whereas the $k+1^{th}$ terms differ.
The decimal part of $d(X,\check{X})$ will decrease when the duration the folding process will be similar increase.
More precisely, $F^k = \check{F}^k$ if and only if the $k+1^{th}$ digit of this decimal part is 0. 
Lastly, $\frac{9}{\mathsf{N}}$ is just a normalization factor.

For instance, if we know where are the $\mathsf{N}+1$ residues of our protein $P$ in the lattice, and if we know what will be its $k$ next folding, then we are into the ball $\mathcal{B}(C,10^{-k})$, that is, very close to the point $(C,F)$ if $k$ is large.

\begin{remark}
In $X^0 = ((1,1, \hdots, 1), F)$, the folding sequence $F^0$ results, among other things, on the hydrophobic interactions between amino acids. 
Indeed, it is this $F^0$ that is searched, when trying to predict the folding process of a given protein $P$.
That is to say, the error on $X^0$ measured by $d$ corresponds to our incapacity to determine \emph{exactly} the whole future folding process $F$ of $P$.
Improving this prediction with better computational intelligence tools leads to the reduction of the distance between $X^0$ (what is looked for) and ${X'}^0$ (our approximation).

The question raised by this study is: even if we cannot have access with an infinite precision to all of the forces that participate to the folding process, \emph{i.e.}, even if we only know an approximation ${X'}^0$ of $X^0$, can we claim that the predicted conformation ${X'}^{n_1}$ still remains close to the true conformation ${X}^{n_2}$ ?
Or, on the contrary, do we have a chaotic behavior, a kind of butterfly effect that magnifies any error on the evaluation of the forces in presence ?
\end{remark}

Raising such a question leads to the study of the dynamical behavior of the folding process. 
To do so, we must firstly establish that $G$ is a continuous map on $(\mathcal{X},d)$.

\subsection{Continuity of the Folding Operation}

\begin{theorem}
$G:\mathcal{X} \to \mathcal{X}$ is a continuous map.
\end{theorem}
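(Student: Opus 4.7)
The plan is to verify continuity of $G$ directly via the $\varepsilon$--$\delta$ criterion on $(\mathcal{X},d)$. Fix $X=(C,F)\in\mathcal{X}$ and $\varepsilon>0$; the goal is to produce $\delta>0$ such that $d(X,\check{X})<\delta$ forces $d(G(X),G(\check{X}))<\varepsilon$. Two structural features of $d$ drive the argument: $d_C$ takes integer values on the finite factor $\llbracket 1;4\rrbracket^{\mathsf N}$, so any constraint $d<1$ already forces $C=\check{C}$; and because the entries $F^k$ are integers in $\llbracket -\mathsf N;\mathsf N\rrbracket$ while the weights in $d_F$ decay like $10^{-k}$, bounding $d_F$ by $\tfrac{9}{2\mathsf N}\cdot 10^{-M}$ forces the first $M$ entries of $F$ and $\check{F}$ to coincide.

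Given $\varepsilon>0$ I would pick an integer $k_0$ with $10^{-k_0}<\varepsilon$ and set
$$\delta \;=\; \min\!\left(1,\;\frac{9}{2\mathsf N \cdot 10^{\,k_0+1}}\right).$$
Whenever $d(X,\check{X})<\delta$, the two observations above give $C=\check{C}$ and $F^j=\check{F}^j$ for $j=0,1,\dots,k_0+1$. The verification is then short. Since $F^0=\check{F}^0$ we have $i(F)=i(\check{F})$, so $f_{i(F)}$ and $f_{i(\check{F})}$ are the \emph{same} function; applied to the identical conformations $C=\check{C}$ they produce identical outputs, so the $d_C$ component of $d(G(X),G(\check{X}))$ is $0$. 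For the $d_F$ component, $\sigma(F)$ and $\sigma(\check{F})$ still agree through index $k_0$, so only tail terms contribute and the geometric estimate yields
$$d_F(\sigma(F),\sigma(\check{F})) \;\le\; \frac{9}{2\mathsf N}\sum_{k=k_0+1}^{\infty}\frac{2\mathsf N}{10^k} \;=\; 10^{-k_0} \;<\;\varepsilon.$$
Adding the two components gives $d(G(X),G(\check{X}))<\varepsilon$, which is the desired conclusion.

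The only real care needed is an indexing subtlety: to keep the first $k_0$ entries of $\sigma(F)$ and $\sigma(\check{F})$ equal, $F$ and $\check{F}$ must coincide on one \emph{more} initial term than the tail bound alone demands, which is precisely what the extra factor of $10^{-1}$ inside $\delta$ buys, and matching the very first term $F^0$ is what makes the pivot and the direction of the fold agree so that $f_{i(F)}$ and $f_{i(\check{F})}$ literally coincide as maps. Beyond this bookkeeping, nothing about the geometry of the fold, the position of the pivot $|i(F)|$, or the self-avoiding-walk constraint is used; $G$ is continuous essentially because it is built from the locally constant projection $F\mapsto F^0$, the shift $\sigma$ (which contracts $d_F$ by a factor comparable to $10$), and a deterministic switch $f_k$ acting on the discrete factor.
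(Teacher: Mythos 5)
Your argument is correct and follows essentially the same route as the paper's proof: both rest on the observations that $d_C$ is integer-valued (so a small distance forces identical conformations), that a small $d_F$ forces the leading terms of the folding sequences --- in particular the first fold $i(F)$, so that $f_{i(F)}$ and $f_{i(\check{F})}$ are literally the same map --- to coincide, and that the shift $\sigma$ only improves the agreement of the tails. The paper phrases this via the sequential characterization of continuity while you give the explicit $\varepsilon$--$\delta$ constants, but the underlying decomposition is identical.
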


\begin{proof}
We will use the sequential characterization of the continuity.
Let $(X^n)_{n \in \mathds{N}} = \left((C^n,F^n)\right)_{n \in \mathds{N}} \in \mathcal{X}^\mathds{N},$ such that $X^n \rightarrow X = (\check{C},\check{F})$. 
We will show that $G\left(X^n\right) \rightarrow G(X)$.
Let us firstly remark that $\forall n \in \mathds{N}, F^n$ is a sequence: $F$ is thus a sequence of sequences.

On the one hand, as $X^n=(C^n,F^n) \rightarrow (\check{C},\check{F})$, we have $d_C\left(C^n,\check{C}\right) \rightarrow 0$, thus $\exists n_0 \in \mathds{N},$ $n \geqslant n_0$ $\Rightarrow d_C(C^n,\check{C})=0$. That is, $\forall n \geqslant n_0,$ $\forall k \in \llbracket 1;\mathsf{N} \rrbracket$, $\delta(C_k^n,C_k) = 0$, and so $C^n = \check{C}, \forall n \geqslant n_0.$ 
Additionally, $d_F(F^n,\check{F}) \rightarrow 0$, then $\exists n_1 \in \mathds{N},$ $d_F(F^n, \check{F}) \leqslant \frac{1}{10}$. As a consequence, $\exists n_1 \in \mathds{N},$ $\forall n \geqslant n_1$, the first term of the sequence $F^n$ is $\check{F}^0$: $i(F^n) = i(\check{F})$.
So, $\forall n \geqslant max(n_0,n_1),$ $f_{i(F^n)}\left(C^n\right) = f_{i\left(\check{F}\right)}\left(\check{C}\right)$, and then $f_{i(F^n)}\left(C^n\right)$ $\rightarrow$ $f_{i\left(\check{F}\right)}\left(\check{C}\right)$.

On the other hand, $\sigma(F^n) \rightarrow \sigma(F)$.
Indeed, $F^n \rightarrow F$ implies  $\sum_{k=1}^{\infty} \frac{| \left(F^n\right)^k-\check{F}^k |}{10^k} \rightarrow 0$.
Thus, $\frac{1}{10} \sum_{k=1}^{\infty} \frac{| \left(F^n\right)^{k+1}-\check{F}^{k+1} |}{10^k} \rightarrow 0$, so $\sum_{k=1}^{\infty} \frac{| \sigma(F^n)^k-\sigma(\check{F})^k |}{10^k} \rightarrow 0$.
Finally, $\sigma(F^n) \rightarrow \sigma(\check{F}).$

To conclude, as $f_{i(F^n)}\left(C^n\right)$ $\rightarrow$ $f_{i\left(\check{F}\right)}\left(\check{C}\right)$ and $\sigma(F^n) \rightarrow \sigma(\check{F})$, we have $G\left(X^n\right) \rightarrow G(X)$.
\end{proof}

It is now possible to study the chaotic behavior of the folding process.

\subsection{Regularity of the Folding Operation}

Let us firstly introduce the following notations: for $X = (C,F) \in \llbracket 1;4 \rrbracket^\mathsf{N}\times \llbracket 1;\mathsf{N} \rrbracket^\mathds{N}$, $\mathcal{C}(X) = C$ and $\mathcal{F}(X)=F$. We will now prove that,

\begin{lemma}
\label{lem}
For all $C,C'$ in $\llbracket 1;4 \rrbracket^\mathsf{N},$ there exist $ k_1, \hdots, k_\mathsf{N}$ in $\llbracket -\mathsf{N}; \mathsf{N} \rrbracket$ s.t. $G^\mathsf{N}\left(C,(k_1, \hdots, k_\mathsf{N},0,\hdots)\right) = \left(C',(0, 0, \hdots ) \right).$
\end{lemma}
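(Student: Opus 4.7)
My plan is a greedy, left-to-right construction. Write $D^{(j)}$ for the conformation component of $G^j\bigl(C,(k_1,\ldots,k_\mathsf{N},0,\ldots)\bigr)$, so $D^{(0)}=C$; note that each application of $G$ consumes one term of the folding sequence via $\sigma$, so after $\mathsf{N}$ steps the tail of the folding sequence is indeed $(0,0,\ldots)$, as required by the statement. The main structural idea is to restrict each $k_j$ to $\{-j,0,j\}$. Under this restriction, the step-$j$ rotation touches only coordinates $j,j+1,\ldots,\mathsf{N}$, so any coordinate $i$ already equal to $C'_i$ at the end of step $i$ stays frozen through the remaining steps, since all later operations have either $|k|=0$ or $|k|>i$.

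The per-step subproblem at step $j$ is then purely local: given the current value $a := D^{(j-1)}_j$ and the target $b := C'_j$, choose $k_j\in\{-j,0,j\}$ so that applying $f^{\mathrm{sign}(k_j)}$ to $a$ returns $b$. Since $f$ is the $4$-cycle $1\mapsto 4\mapsto 2\mapsto 3\mapsto 1$, the three available options yield exactly $\{a,f(a),f^{-1}(a)\}$, which cover every possible target except $f^2(a)$. In the generic case a single step therefore suffices; the later coordinates $j+1,\ldots,\mathsf{N}$ are rotated as a side effect, but this is invisible to the invariant because each of them is scheduled to be set at its own subsequent step.

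The main obstacle I anticipate is the exceptional case $b = f^2(a)$, in which the current and target values are antipodal in the cycle $(1,4,2,3)$ and no single rotation can bridge them. Resolving it requires relaxing the strict one-step-per-coordinate scheme: one has to spend two consecutive rotations at the same index to produce the composite $f^2$, which then forces pairing each antipodal coordinate with some other coordinate that already matches its target so that the latter's step can be repurposed. Showing that such a pairing (equivalently, an assignment of the $\mathsf{N}$ steps to the $\mathsf{N}$ coordinates in which the total cost never exceeds the budget) always exists is the real combinatorial content of the lemma, and is where I would concentrate the care; once established, the remaining unused $k_i$'s are set to $0$ and the claim follows.
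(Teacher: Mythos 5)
Your greedy scheme is essentially the same argument as the paper's: the paper's induction on $\mathsf{N}$ fixes one coordinate per step in order, which is your one-step-per-coordinate schedule in disguise, and the case you single out as ``the real combinatorial content'' --- the antipodal case $C'_j = f^2(a)$ --- is exactly the case the paper's proof silently omits in both its base case and its inductive step (it lists only the alternatives $x = C'$, $f(x)=C'$, $f^{-1}(x)=C'$, although $f$ has order $4$). Unfortunately the gap you leave open cannot be closed, because the lemma as stated is false. Already for $\mathsf{N}=1$ take $C=(1)$ and $C'=(2)$: since $k_1 \in \llbracket -1;1\rrbracket$, the only values reachable in the single available step are $f^0(1)=1$, $f(1)=4$ and $f^{-1}(1)=3$, and $2=f^2(1)$ is not among them. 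The repair you sketch (repurposing the step of a coordinate that already matches its target) fails in general for a counting reason: after $\mathsf{N}$ steps, coordinate $i$ has received $f^{e_i}$ with $e_i = \sum_{j\,:\,1 \le |k_j| \le i} \mathrm{sign}(k_j)$, so the increment $e_i - e_{i-1}$ is contributed only by those steps $j$ with $|k_j| = i$; whenever that increment must be $\equiv 2 \pmod 4$, at least two such steps are required. Taking $C=(1,1,\ldots,1)$ and $C'=(2,1,2,1,\ldots)$ forces every increment to be $\equiv 2 \pmod 4$, hence at least $2\mathsf{N}$ nonzero instructions, exceeding the budget of $\mathsf{N}$.

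The honest conclusion is that neither your proposal nor the paper's induction establishes the lemma, and the statement must be weakened before either argument can succeed: replacing $G^{\mathsf{N}}$ by $G^{2\mathsf{N}}$ (with $2\mathsf{N}$ instructions $k_1,\ldots,k_{2\mathsf{N}}$) makes your left-to-right construction go through verbatim, since each coordinate can then be granted two consecutive steps at the same index to realize $f^{\pm 2}$ when needed, and this weaker form still suffices for the downstream regularity and transitivity arguments. To your credit, yours is the only write-up of the two that names the obstruction; but as submitted it asserts rather than proves the key pairing claim, and that claim is false.
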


\begin{proof}
We will prove this lemma by a mathematical induction on $\mathsf{N} \in \mathds{N}^*$.

For the base case $\mathsf{N}=1$, if $C_1 = C_1'$, then the result is satisfied with $k_1=0$. Else, either $C_1' = f_1(C_1)$ then $k_1=1$ holds, or $C_1' = f_1^{-1}(C_1)$ then $k_1=-1$.

Let us now suppose that the statement holds for some $\mathsf{N} \in \mathds{N}^*$. Let $C,C' \in \llbracket 1;4 \rrbracket^{\mathsf{N}+1}$. According to the inductive hypothesis, $\exists k_1, \hdots, k_\mathsf{N} \in \llbracket -\mathsf{N}, \mathsf{N} \rrbracket$ such that $\tilde{G}^{\mathsf{N}}\left( (C_1, \hdots, C_\mathsf{N}), (k_1, \hdots, k_\mathsf{N},0, \hdots) \right) = \left( (C_1', \hdots, C_\mathsf{N}'), (0, 0,\hdots) \right)$, where $\tilde{G}$ is the restriction of $G$ on its $\mathsf{N}-$th firsts variables. Let $x = \mathcal{C}\left(G^\mathsf{N}\left( (C_1, \hdots, C_\mathsf{N+1}), (k_1, \hdots, k_\mathsf{N+1},0, \hdots) \right)\right)_{\mathsf{N+1}}$. If $x=C_{\mathsf{N}+1}'$, then $k_{\mathsf{N}+1}=0$ holds. Else, either $f(x)=C_{\mathsf{N}+1}'$, and $k_{\mathsf{N}+1}=\mathsf{N}+1$ holds,
or $f^{-1}(x)=C_{\mathsf{N}+1}'$, and then $k_{\mathsf{N}+1}=-(\mathsf{N}+1$).
\end{proof}

We can now prove that,

\begin{proposition}
Protein folding is regular.
\end{proposition}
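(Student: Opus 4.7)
The plan is to prove density of periodic points directly: for each $X = (C, F) \in \mathcal{X}$ and each $\varepsilon > 0$, I will construct a point $\tilde{X} \in \mathcal{X}$ with $d(X, \tilde{X}) < \varepsilon$ that satisfies $G^{n}(\tilde{X}) = \tilde{X}$ for some $n$. Since $d_C$ is integer-valued, I can afford to keep the conformation component of $\tilde{X}$ exactly equal to $C$; the whole error budget then lies in the folding sequence, so closeness will come from agreeing with $F$ on a sufficiently long prefix.

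First I would choose an integer $k$ large enough that any folding sequence $\tilde{F}$ coinciding with $F$ on its first $k$ entries satisfies $d_F(F, \tilde{F}) < \varepsilon$. Using the crude bound $|F^j - \tilde{F}^j| \leq 2\mathsf{N}$, the geometric tail $\frac{9}{2\mathsf{N}} \sum_{j > k} \frac{2\mathsf{N}}{10^j}$ is bounded by $10^{-k+1}$, so $k \simeq \log_{10}(1/\varepsilon)$ suffices. Next I would build $\tilde{F}$ in two stages. Let $C_k := \mathcal{C}(G^k(X))$ be the conformation reached after $k$ iterations of $G$ on $X$. Apply Lemma \ref{lem} to the pair $(C_k, C)$ to obtain integers $k_1, \ldots, k_\mathsf{N} \in \llbracket -\mathsf{N};\mathsf{N} \rrbracket$ that drive $C_k$ back to $C$ in exactly $\mathsf{N}$ folds. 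Then define $\tilde{F}$ as the infinite sequence of period $k+\mathsf{N}$ obtained by repeating the block
$$B = (F^0, F^1, \ldots, F^{k-1}, k_1, k_2, \ldots, k_\mathsf{N})$$
forever, and set $\tilde{X} = (C, \tilde{F})$.

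The verification then splits cleanly in two. For closeness, $\tilde{F}$ matches $F$ on its first $k$ positions, so $d(X, \tilde{X}) = d_F(F, \tilde{F}) \leq 10^{-k+1} < \varepsilon$ by the choice of $k$. For periodicity, I would run $G^{k+\mathsf{N}}$ on $\tilde{X}$: the first $k$ iterations consume the folds $F^0, \ldots, F^{k-1}$, identical to those used to define $C_k$, so the conformation after $k$ steps is exactly $C_k$; the next $\mathsf{N}$ iterations consume $k_1, \ldots, k_\mathsf{N}$, which by Lemma \ref{lem} restore the conformation to $C$. Meanwhile $\sigma$ has been iterated $k+\mathsf{N}$ times on $\tilde{F}$, and by the $(k+\mathsf{N})$-periodic construction, $\sigma^{k+\mathsf{N}}(\tilde{F}) = \tilde{F}$. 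Hence $G^{k+\mathsf{N}}(\tilde{X}) = \tilde{X}$, and $\tilde{X}$ is periodic.

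The one subtlety worth flagging — and really the only obstacle — is that Lemma \ref{lem} is stated for an input whose folding sequence has trailing zeros after position $\mathsf{N}$, whereas $\sigma^k(\tilde{F})$ has a nonzero tail beyond $k_\mathsf{N}$. This is harmless, because a single step of $G$ reads only the first entry $i(F)$ of the folding sequence and then shifts, so the conformation trajectory during those $\mathsf{N}$ steps depends only on $k_1, \ldots, k_\mathsf{N}$ and is oblivious to whatever follows. The conformation-level conclusion of the lemma therefore transfers verbatim, and no genuine difficulty remains beyond organising the two pieces.
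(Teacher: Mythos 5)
Your construction is correct and is essentially the paper's own proof: keep the conformation equal to $C$, copy a prefix of $F$ of length about $\log_{10}(1/\varepsilon)$ to guarantee $\varepsilon$-closeness, invoke Lemma \ref{lem} with the pair $\left(\mathcal{C}(G^{k}(X)), C\right)$ to steer the conformation back to $C$ in $\mathsf{N}$ folds, and repeat the resulting block to make the folding sequence (hence the point) periodic. Your explicit tail bound and your remark that a step of $G$ reads only the first entry of the folding sequence (so the lemma's conclusion on conformations applies despite the nonzero tail) are details the paper leaves implicit, but the argument is the same.
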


\begin{proof}
Let $X=(C,F) \in \mathcal{X}$ and $\varepsilon > 0$. 
Define $k_0=-\lfloor log_{10} (\varepsilon) \rfloor$
and $\tilde{X}$ such that:
(1) $\mathcal{C}(\tilde{X}) = C$,
(2) $\forall k \leqslant k_0, \mathcal{F}\left(G^k(\tilde{X})\right) = \mathcal{F}\left(G^k(X)\right)$,
(3) $\forall i \in \llbracket 1; \mathsf{N} \rrbracket, \mathcal{F}\left(G^{k_0+i}(\tilde{X})\right) = k_i,$ and
(4) $\forall i \in \mathds{N}, \mathcal{F} \left(G^{k_0+\mathsf{N}+i+1}(\tilde{X})\right) = \mathcal{F}\left(G^i(\tilde{X})\right)$,
 where $k_1, \hdots, k_n$ are integers given by lemma \ref{lem} with $C=\mathcal{C}\left( G^{k_0}(X) \right), C'=\mathcal{C}(X)$. Such a $\tilde{X}$ is a periodic point for $G$ into the ball $\mathcal{B}(X,\varepsilon)$. 
(1) and (2) are to make $\tilde{X}$ $\varepsilon-$close to $X$, (3) is for mapping $\mathcal{C}\left(G^{k_0}(\tilde{X})\right)$ into $C$ in at most $\mathsf{N}$ folding process. 
Lastly, (4) is for the periodicity of the folding process.
\end{proof}

\subsection{Transitivity of the Folding Operation}

Instead of proving the transitivity of $G$, which is required in the definition of chaos, we will establish its strong transitivity:

\begin{definition}
A dynamical system $\left( \mathcal{X}, f\right)$ is strongly transitive if $\forall x,y \in \mathcal{X},$ $\forall r > 0,$ $\exists z \in \mathcal{X},$ $d(z,x) \leqslant r \Rightarrow$ $\exists n \in \mathds{N}^*,$ $f^n(z)=y$.
\end{definition}

Obviously, strong transitivity implies transitivity. Let us now prove that,

\begin{proposition}
Protein folding is strongly transitive.
\end{proposition}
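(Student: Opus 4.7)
Given $x = (C, F)$ and $y = (C', F')$ in $\mathcal{X}$ and $r > 0$, my plan is to build $z$ as an explicit three-block concatenation. The folding sequence $\tilde{F}$ of $z$ is cut into: (i) a prefix that copies enough initial entries of $F$ to force $d(z, x) \leq r$; (ii) a middle block of length $\mathsf{N}$ delivered by Lemma~\ref{lem} that drives the conformation to $C'$; and (iii) a tail that is literally $F'$, so that once the first two blocks have been consumed the iterates of $G$ land exactly on $y$.

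Concretely, I would first choose $k_0 \in \mathds{N}$ large enough that any folding sequence agreeing with $F$ on the first $k_0$ indices lies within $d_F$-distance $r$ of $F$; since the series defining $d_F$ has geometric ratio $1/10$ and bounded numerator, a logarithmic choice $k_0 \approx \lceil -\log_{10} r \rceil$ suffices. Setting $\mathcal{C}(z) = C$ kills the integer part $d_C(z, x) = 0$. Let $C'' = \mathcal{C}(G^{k_0}(x))$ be the conformation reached from $x$ after $k_0$ iterations; once the prefix of $\tilde{F}$ has been copied from $F$, the conformation reached from $z$ after the same $k_0$ iterations is also $C''$. Apply Lemma~\ref{lem} to the pair $(C'', C')$ to obtain integers $k_1, \ldots, k_\mathsf{N} \in \llbracket -\mathsf{N}; \mathsf{N} \rrbracket$ that turn $C''$ into $C'$ in exactly $\mathsf{N}$ iterations, noting that the conformation evolution over these $\mathsf{N}$ steps depends only on the first $\mathsf{N}$ terms of the folding sequence, so the zero-tail in the lemma may be replaced by $F'$ without affecting the conclusion. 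Concatenating the three blocks yields
\[
\tilde{F}^{j} = F^j \text{ for } 0 \leq j < k_0, \quad \tilde{F}^{k_0 + i - 1} = k_i \text{ for } 1 \leq i \leq \mathsf{N}, \quad \tilde{F}^{k_0 + \mathsf{N} + j} = (F')^{j} \text{ for } j \geq 0,
\]
and we set $z = (C, \tilde{F})$.

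Verification then splits into two routine checks. For closeness, $d_C(z, x) = 0$ and the geometric tail bound give $d_F(\tilde{F}, F) \leq r$, hence $d(z, x) \leq r$. For the dynamics, iterating $G$ exactly $k_0$ times on $z$ reaches $(C'', \sigma^{k_0}(\tilde{F}))$; the next $\mathsf{N}$ iterations use the middle block together with Lemma~\ref{lem} to send $C''$ to $C'$ while $\sigma$ consumes $k_1, \ldots, k_\mathsf{N}$; after these $k_0 + \mathsf{N}$ iterations the remaining folding sequence is exactly $F'$. Hence $G^{k_0 + \mathsf{N}}(z) = y$. The only real obstacle is purely bookkeeping: aligning the indices of $\sigma$ with those of the three blocks, and being careful that Lemma~\ref{lem} is applied to the conformation $C''$ actually realized at time $k_0$ rather than to the initial $C$. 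Once this is done, strong transitivity, and hence the ordinary transitivity required by Devaney's definition of chaos, follows immediately.
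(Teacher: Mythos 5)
Your proposal is correct and is essentially identical to the paper's own proof: both choose $k_0 \approx -\lfloor \log_{10} \varepsilon \rfloor$, copy the prefix of the source folding sequence to guarantee closeness, apply Lemma~\ref{lem} to the conformation actually reached at time $k_0$ versus the target conformation, and append the target's folding sequence as the tail. The only differences are off-by-one indexing conventions (the paper lands on $X_B$ after $k_0+\mathsf{N}+1$ iterations, you after $k_0+\mathsf{N}$), and your explicit remark that the zero-tail in the lemma can be replaced by $F'$ without disturbing the conformation evolution is a detail the paper leaves implicit.
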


\begin{proof}
Let $X_A=(C_A,F_A)$, $X_B=(C_B, F_B)$ and $\varepsilon > 0$.
We will show that $X \in \mathcal{B}\left(X_A, \varepsilon\right)$ and $n \in \mathds{N}$ can be found such that $G^n(X)=X_B$.
Let $k_0 = - \lfloor log_{10} (\varepsilon ) \rfloor$ and $\check{X}=G^{k_0}(C_A,F_A)$, denoted by $\check{X}=(\check{C},\check{F})$.
According to lemma \ref{lem} applied to $\check{C}$ and $C_B$, $\exists k_1, \hdots, k_\mathsf{N} \in \llbracket -\mathsf{N}, \mathsf{N} \rrbracket$ such that $$G^\mathsf{N}\left( \check{C}, (k_1, \hdots, k_\mathsf{N},0,\hdots)\right) = \left(C_B, (0, \hdots )\right).$$
Let us define $X=(C,F)$ in the following way:
(1) $C=C_A$,
(2) $\forall k \leqslant k_0, F^k=F_A^k$,
(3) $\forall i \in \llbracket 1; \mathsf{N} \rrbracket, F^{k_0+i} = k_i$, and
(4) $\forall i \in \mathds{N}, F^{k_0+\mathsf{N}+i+1}=F_B^i$.
This point $X$ is thus an element of $\mathcal{B}(X_A,\varepsilon)$ (due to $1,2$), which is such that $G^{k_0+\mathsf{N}+1}(X) = X_B$ (by using $3,4$). As a consequence, $G$ is strongly transitive.
\end{proof}

This property is very important. It shows among other things that being as close as possible of the true folding process, for instance by using a very large basis of knowledge and numerous levels of resolution, is not a guarantee of success. Indeed, \emph{for any possible conformation} $c$, there is a prediction as good as possible of our considered protein, which leads to $c$.

\subsection{Chaotic behavior of the folding process}

As $G$ is regular and (strongly) transitive, we have:

\begin{theorem}
The protein folding process $G$ is chaotic according to Devaney.
\end{theorem}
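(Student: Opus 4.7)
The plan is to simply assemble the three ingredients already laid out in the previous subsections and invoke Devaney's definition directly. Recall that $f$ is chaotic on a topological space $(\mathcal{X},\tau)$ if it is continuous, regular (periodic points dense), and topologically transitive. Each of these three hypotheses has essentially been secured earlier in this section, so the proof reduces to a bookkeeping step.

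First I would recall that the continuity of $G$ on the metric space $(\mathcal{X},d)$ was established in the continuity theorem of the previous subsection, which ensures that Devaney's framework (which is stated for continuous maps on a topological space) is applicable here. Second, I would cite the Proposition asserting that $G$ is regular, so that the set of periodic points is dense in $\mathcal{X}$. Third, I would invoke the Proposition establishing strong transitivity of $G$, together with the trivial observation (explicitly noted just after the definition of strong transitivity) that strong transitivity implies ordinary topological transitivity: given open sets $U,V\subset\mathcal{X}$, pick any $x\in U$ and any $y\in V$, choose $r>0$ small enough that $\mathcal{B}(x,r)\subset U$, and apply strong transitivity to produce $z\in\mathcal{B}(x,r)\subset U$ and $n\in\mathds{N}^{*}$ with $G^{n}(z)=y\in V$, whence $G^{n}(U)\cap V\neq\emptyset$.

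With continuity, regularity, and transitivity all in hand, Devaney's definition is satisfied verbatim, and the conclusion follows. As a closing remark I would mention that, by the Banks \emph{et al.} result recalled in Section~\ref{Sec:basic recalls}, $G$ automatically inherits sensitive dependence on initial conditions on the metric space $(\mathcal{X},d)$, which is the property that captures the intuitive ``butterfly effect'' discussed informally in the remark following the definition of the metric $d$.

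There is no genuine obstacle left at this stage: the hard work was done in proving the strong transitivity lemma (via Lemma~\ref{lem}, which showed that any conformation can be reached from any other in at most $\mathsf{N}$ folding steps) and in choosing a metric $d$ whose decimal part is carefully calibrated so that prescribing finitely many future folds keeps us inside a prescribed ball. The present theorem is thus a one-line corollary of the regularity and strong transitivity propositions, and I would write it as such rather than repeating any of the earlier arguments.
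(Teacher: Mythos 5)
Your proposal is correct and follows exactly the paper's own route: the theorem is stated there as an immediate consequence of the regularity and strong transitivity propositions, with continuity having been established beforehand. Your added details (the explicit derivation of transitivity from strong transitivity and the closing remark on sensitivity via Banks \emph{et al.}) are consistent with, and slightly more explicit than, the paper's one-line argument.
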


Consequently this process is highly sensitive to its initial condition. 
In particular, even a minute difference on an intermediate conformation of the protein, in forces that act in the folding process, or in the position of an atom, can lead to enormous differences in its final conformation, even over fairly small timescales.
This is the so-called butterfly effect.
In particular, it seems very difficult to predict the 2D structure of a given protein by using the knowledge of the structure of similar proteins.
Let us finally remark that the whole 3D folding process with real torsion angles is obviously more complex than this 2D HP model.
Indeed, if the complete 3D folding process were predictable, then this simplistic version would be predictable too, as one of its particular cases.

As a conclusion, theoretically speaking, the folding process in unpredictable. Before studying some practical aspects of this unpredictability in Section \ref{Sec:Consequences}, we will initiate a second proof of the chaotic behavior of this process.

\section{Outlines of a second proof}

\subsection{Motivations}

In this section a second proof of the chaotic behavior of the protein folding process is given.
It is proven that the folding dynamic can be modeled as chaotic iterations (CIs). CIs are a tool used in distributed computing and in the computer science security field \cite{guyeuxTaiwan10} that has been established to be chaotic according to Devaney \cite{guyeux10}. 

This second proof is the occasion to introduce these CIs, which will be used at the end of this paper to study whether a chaotic behavior is really more difficult to learn with a neural network than a ``normal'' behavior.

\subsection{Chaotic Iterations: Basic Recalls}

Let us consider a \emph{system} with a finite number $\mathsf{N} \in
\mathds{N}^*$ of elements (or \emph{cells}), so that each cell has a
boolean \emph{state}. 
A sequence of length $\mathsf{N}$ of boolean
states of the cells corresponds to a particular \emph{state of the
system}. 
A sequence, which elements are subsets of $\llbracket 1;\mathsf{N}
\rrbracket $, is called a \emph{strategy}. The set of all strategies is denoted by $\mathbb{S}$ and the set $\mathds{B}$ is for $\{0,1\}$.

\begin{definition}
\label{Def:chaotic iterations}
Let
$f:\mathds{B}^{\mathsf{N}}\longrightarrow \mathds{B}^{\mathsf{N}}$ be
a function and $S\in \mathbb{S}$ be a strategy. The so-called
\emph{chaotic iterations} (CIs) are defined by $x^0\in
\mathds{B}^{\mathsf{N}}$ and $\forall n\in \mathds{N}^{\ast }, \forall i\in
\llbracket1;\mathsf{N}\rrbracket ,$
$$ 
x_i^n=\left\{
\begin{array}{ll}
 x_i^{n-1} & \text{ if } i \notin S^n \\ 
 \left(f(x^{n-1})\right)_{S^n} & \text{ if } i \in S^n.
\end{array}\right.
$$
\end{definition}

In other words, at the $n^{th}$ iteration, only the $S^{n}-$th cells are
\textquotedblleft iterated\textquotedblright .
Let us remark that the term ``chaotic'', in the name of these iterations, has \emph{a
priori} no link with the mathematical theory of chaos recalled previously.

\medskip

We will now recall that CIs can be written as a dynamical system, and characterize functions $f$ such that their CIs are chaotic according to Devaney \cite{guyeux09}.

\subsection{CIs and Devaney's chaos}

Given a function 
$f: \mathds{B}^{\mathsf{N}}\longrightarrow \mathds{B}^{\mathsf{N}}$, 
define the function $F_{f}:$ 
$\llbracket1;\mathsf{N}\rrbracket\times \mathds{B}^{\mathsf{N}}\longrightarrow 
\mathds{B}^{\mathsf{N}}$ 
such that
\begin{equation*}
F_{f}(k,E)=\left( E_{j}.\delta (k,j)+f(E)_{k}.\overline{\delta (k,j)}\right)_{j\in \llbracket1;\mathsf{N}\rrbracket},
\end{equation*}

\noindent where + and . are the boolean addition and product
operations, $\overline{x}$ is for the negation of $x$.

We have proven in \cite{guyeux09} that chaotic iterations can be described by the following
dynamical system:
\begin{equation*}
\left\{
\begin{array}{l}
X^{0}\in \tilde{\mathcal{X}} \\
X^{k+1}=\tilde{G}_{f}(X^{k}).
\end{array}
\right.
\end{equation*}

\noindent where $\tilde{G}_{f}\left( S,E\right) =\left( \sigma (S),F_{f}(i(S),E)\right)$, and $\tilde{\mathcal{X}}$ is a metric space for an ad hoc distance such that $\tilde{G}$ is continuous on $\mathcal{X}$ \cite{guyeux09}.

\medskip

Let now be given a configuration $x$. In what follows the configuration
$N(i,x) = (x_1,\ldots,\overline{x_i},\ldots,x_n)$ is obtained by
switching the $i-$th component of $x$. Intuitively, $x$ and $N(i,x)$
are neighbors. The chaotic iterations of the function $f$ can be
represented by the graph $\Gamma(f)$ defined below.

\begin{definition}
In the oriented \emph{graph of iterations} $\Gamma(f)$, vertices are
configurations of $\mathds{B}^\mathsf{N}$, and there is an arc labeled
$i$ from $x$ to $N(i,x)$ iff $F_f(i,x)$ is $N(i,x)$.
\end{definition}

We have proven in \cite{GuyeuxThese10} that:

\begin{theorem}
\label{Th:Caracterisation des IC chaotiques} Functions $f :
\mathds{B}^{n} \to \mathds{B}^{n}$ such that $\tilde{G}_f$ is chaotic
according to Devaney, are functions such that the graph $\Gamma(f)$ is
strongly connected.
\end{theorem}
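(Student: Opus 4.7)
The plan is to prove the equivalence by treating each implication separately, exploiting the fact that the graph $\Gamma(f)$ records precisely the one-step reachability of configurations under $\tilde{G}_f$: if at iteration $n$ we have $S^n = \{i\}$ and $F_f(i,x) = N(i,x)$, then the configuration component moves from $x$ to $N(i,x)$ in one step. Consequently, a finite directed path in $\Gamma(f)$ from $x$ to $y$ is in bijection with a finite strategy prefix that drives $x$ to $y$ in the configuration space.

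For the easy direction ($\tilde{G}_f$ chaotic $\Rightarrow$ $\Gamma(f)$ strongly connected), I would fix any two configurations $x, y \in \mathds{B}^\mathsf{N}$ and consider the cylinder sets $U_x = \{(S,E)\in\tilde{\mathcal{X}} : E = x\}$ and $V_y = \{(S,E)\in\tilde{\mathcal{X}} : E = y\}$, which are open for the ad hoc distance of \cite{guyeux09}. Topological transitivity yields $(S,E) \in U_x$ and $n \in \mathds{N}^*$ with $\tilde{G}_f^n(S,E) \in V_y$; reading off $S^1, \ldots, S^n$ produces a walk in $\Gamma(f)$ from $x$ to $y$, proving strong connectivity.

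For the harder converse, I would verify regularity and transitivity directly, since continuity is already furnished by \cite{guyeux09}. For transitivity, given non-empty open sets $U$ and $V$, select $(S,E)\in U$ and $(S',E') \in V$ together with an integer $k$ such that all points sharing the configuration $E$ and the first $k$ strategy terms of $S$ lie in $U$ (and analogously for $V$). Iterate $k$ steps from $(S,E)$ to obtain a configuration $E_k$; strong connectivity supplies a path in $\Gamma(f)$ from $E_k$ to $E'$, corresponding to strategy indices $t_1, \ldots, t_m$. Splicing $S^1,\ldots,S^k,t_1,\ldots,t_m,{S'}^{1},{S'}^{2},\ldots$ yields a new strategy whose associated point lies in $U$ and is sent into $V$ after $k+m$ iterations. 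Regularity is obtained by the same splicing trick: given $(S,E)$ and a radius $\varepsilon$, freeze the first $k$ strategy terms, use strong connectivity to construct a strategy block of length $m$ returning the configuration to its current value, and then declare this block of length $k+m$ to be the period, repeating it ad infinitum.

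The main obstacle is the bookkeeping of the metric topology: one must check that the spliced strategies actually land inside the chosen cylinder neighborhoods, which depends on the explicit form of the distance on $\tilde{\mathcal{X}}$. Strong connectivity is the decisive ingredient in each splicing argument, since without it a whole orbit could be trapped inside a proper subset of $\mathds{B}^\mathsf{N}$, forbidding both the long-range jumps demanded by transitivity and the dense returns demanded by regularity. Conversely, if $\Gamma(f)$ admits a non-trivial strongly connected component, the same argument used in the easy direction shows that transitivity would allow one to escape it, a contradiction, which justifies why strong connectivity is the exact condition.
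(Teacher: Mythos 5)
The paper itself does not prove this theorem: it is imported wholesale from \cite{GuyeuxThese10} (``We have proven in \cite{GuyeuxThese10} that...''), so there is no in-paper argument to compare against. Your proposal reconstructs what is, in substance, the proof given in that reference: cylinder sets of the form $\{(S,E) : E = x\}$ are open because any two points with distinct configurations are at distance at least $1$ in the ad hoc metric, so transitivity forces strong connectivity; conversely, strategy splicing along paths of $\Gamma(f)$ yields both transitivity and regularity. The structure and the key observations (arcs of $\Gamma(f)$ encode one-step reachability; steps where $F_f(i,x)=x$ contribute no arc but also require none) are all correct.

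Two small points of bookkeeping you correctly flag but should make explicit if writing this out in full. First, in the transitivity argument the point reached after $k+m$ iterations is $\bigl(\sigma^{k+m}(\tilde{S}), E'\bigr)$, so the tail you append must be chosen so that this shifted strategy agrees with $S'$ on its first $k'$ terms, where $k'$ is the depth of the cylinder contained in $V$; appending the full tail of $S'$ starting from ${S'}^{0}$ (rather than ${S'}^{1}$) makes the landed point exactly $(S',E')$ and avoids any indexing slip. Second, in the regularity argument, if the configuration after $k$ steps already equals $E$ one takes $m=0$ and the period is simply $k$; otherwise strong connectivity supplies the return block. Your closing remark about escaping a proper strongly connected component is redundant with the easy direction and can be dropped. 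With those details pinned down, the proof is complete and matches the cited one.
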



We will now show that the protein folding process can be modeled as chaotic iterations, and conclude the proof by using the theorem recalled above.

\subsection{Protein Folding as Chaotic Iterations}

The attempt to use chaotic iterations with a view to model protein folding can be justified as follows.
At each iteration, the same process is applied to the system (\emph{i.e.}, to the conformation), that is the folding operation.
Additionally, it is not a necessity that all of the residues fold at each iteration:
indeed it is possible that, at a given iteration, only some of these residues folds.
Such iterations, where not all the cells of the considered system are to be updated, are exactly the iterations modeled by CIs.

Indeed, the protein folding process with folding sequence $(F^n)_{n \in \mathds{N}}$ consists in the following chaotic iterations: $C^0 = (1,1, \hdots, 1)$ and,
$$
C_{|i|}^{n+1} = \left\{
\begin{array}{ll}
C_{|i|}^n & \textrm{if } i \notin S^n,\\
f^{sign(i)}(C^n)_i & \textrm{else},
\end{array}
\right.
$$
\noindent where the chaotic strategy is defined by $\forall n \in \mathds{N}, S^n = \llbracket -\mathsf{N}; \mathsf{N} \rrbracket \setminus \llbracket -F^n; F^n \rrbracket$.

Thus, to prove that the protein folding process is chaotic as defined by Devaney, is equivalent to prove that the graph of iterations of the CIs defined above is strongly connected. This last fact is obvious, as it is always possible to find a folding process that map any conformation $(c_1, \hdots, c_\mathsf{N}) \in \llbracket 1 ;4 \rrbracket^\mathsf{N}$ to any other $(c_1', \hdots, c_\mathsf{N}') \in \llbracket 1 ;4 \rrbracket^\mathsf{N}$ (this is lemma \ref{lem}).

Let us finally remark that it is easy to study processes s.t. more than one fold occur per time unit, by using CIs.
This point will be deepened in a future work.
We will now investigate some consequences of the chaotic behavior of the folding process.

\section{Consequences}
\label{Sec:Consequences}

\subsection{Is a chaotic behavior incompatible with approximately one thousand folds ?}

Claiming that the protein folding process is chaotic seems to be contradictory with the fact that only approximately one thousand folds have been discovered this last decade. 
The number of proteins that have an understood 3D structure increase largely year after year. 
However the number of new categories of folds seems to be limited by a fixed value approximately equal to one thousand.
Indeed, there is no contradiction as a chaotic behavior does not forbid a certain form of order.
For example, seasons are not forbidden even if weather forecast has a non-intense chaotic behavior.
A same regularity appears in brains: even if hazard and chaos play an important rule in a microscopic scale, a statistical order appears in the complete neural network.
 
That is, a certain order can emerge from a chaotic behavior, even if it is not a rule of thumb.
More precisely, in our opinion these thousand folds can be related to basins of attractions or strange attractors of the dynamical system, objects that are well described by the mathematical theory of chaos.
Thus, it should be possible to determine all of the folds that can occur, by refining our model and looking for its basins of attractions with topological tools.
However, this assumption still remains to be more largely investigated.

\subsection{Is Artificial Intelligence able to Predict Chaotic dynamic ?}

\subsubsection{Experimental Protocol}

We will now wonder whether a chaotic behavior can be learned by a neural network or not. 
These considerations have been formerly proposed in \cite{arxivRNNchaos}.

We consider $f:\mathds{B}^\mathsf{N} \longrightarrow \mathds{N}^\mathsf{N}$, strategies of singletons ($\forall n \in \mathds{N}, S^n \in \llbracket 1; \mathsf{N} \rrbracket$), and a MLP which recognize $F_{f}$. 
That means, for all $(k,x) \in \llbracket 1 ; \mathsf{N} \rrbracket \times
\mathds{B}^\mathsf{N}$, the response of the output layer to the input
$(k,x)$ is $F_{f}(k,x)$.
We thus connect the output layer to the input one as it is depicted in
Figure~\ref{perceptron}, leading to a global recurrent artificial neural network (ANN) working as follows \cite{arxivRNNchaos}.

\begin{figure}[!t]
\centering
\includegraphics[width=2.75in]{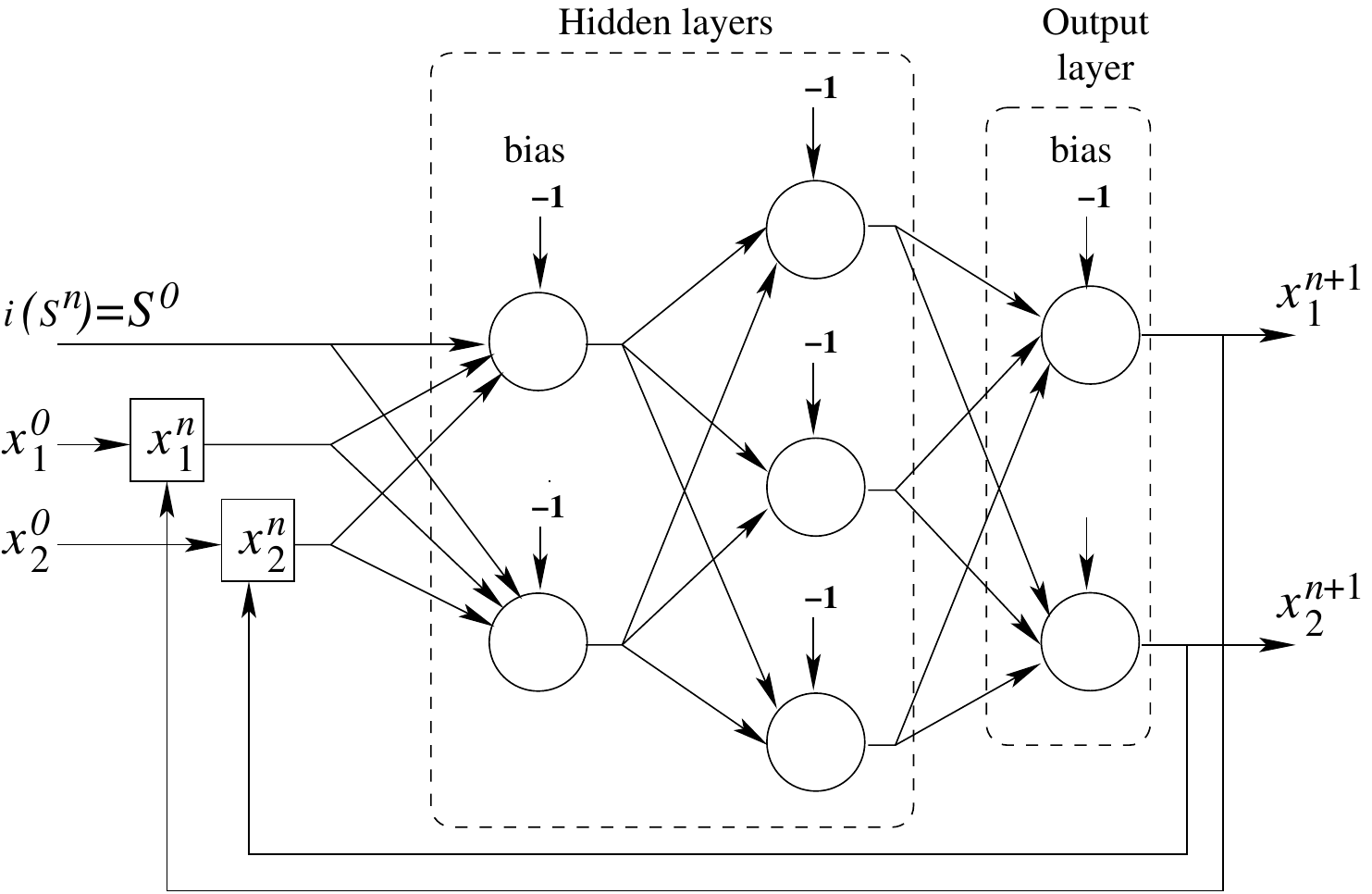}
\caption{RNN modeling $F_{f}$}
\label{perceptron}
\hfil
\end{figure}

At the initialization stage, the ANN receives a boolean vector
$x^0\in\mathds{B}^\mathsf{N}$ as input state, and $S^0 \in \llbracket
1;\mathsf{N}\rrbracket$ in its input integer channel $i()$. Thus,
$x^1 = F_{f}(S^0, x^0)\in\mathds{B}^\mathsf{N}$ is computed by the
neural network. This state $x^1$ is published as an output. Additionally, $x^1$
is sent back to the input layer, to act as boolean state in the next
iteration.
Finally, at iteration number $n$, the recurrent neural network receives
the state $x^n\in\mathds{B}^\mathsf{N}$ from its output layer and
$i\left(S^n\right) \in \llbracket 1;\mathsf{N}\rrbracket$ from its
input integer channel $i()$. It can thus calculate $x^{n+1} =
F_{f}(i\left(S^n\right), x^n)\in\mathds{B}^\mathsf{N}$, which will
be the new output of the network.
Obviously, this particular MLP produces exactly the same values than CIs with update function $f$. 
That is, such MLPs are equivalent, when working with $i(s)$, to CIs with $f$ as update function \cite{arxivRNNchaos} and strategy $S$.

Let us now introduce the two following functions:
$f_1(x_{1},x_2,x_3)=(\overline{x_{1}},\overline{x_{2}},\overline{x_{3}})$ and
 $f_2(x_{1},x_2,x_3)=(\overline{x_{1}},x_{1},x_{2})$.
It can easily be checked that these functions satisfy the hypothesis of Theorem \ref{Th:Caracterisation des IC chaotiques}, thus their CIs are chaotic according to Devaney. 
Then when the MLP defined above learn to recognize $F_{f_1}$ or $F_{f_2}$, indeed it tries to learn these CIs, that is, a chaotic behavior as defined by Devaney \cite{arxivRNNchaos}.
On the contrary, the function $g(x_{1},x_2,x_3)=(\overline{x_{1}},x_{2},x_{3})$
is such that $\Gamma(g_1)$ is not strongly connected. In this case, due to Theorem \ref{Th:Caracterisation des IC chaotiques}, the MLP does not learn a chaotic process.
We will now study the training process of functions $F_{f_1}$, $F_{f_2}$, and $F_{g}$ \cite{arxivRNNchaos}, that is to say, the ability to learn one iteration of CIs.

\subsubsection{Experimental results}

For each neural network we have considered MLP architectures with one and two hidden layers, with in the first case different numbers of hidden neurons (sigmoidal activation). 
Thus we will have different versions of a neural network modeling the same iteration function \cite{arxivRNNchaos}.
Only the size and number of hidden layers may change, since the numbers of inputs and output neurons (linear activation) are fully specified by the function. 
The neural networks are trained using the quasi-Newton L-BFGS (Limited-memory Broyden-Fletcher-Goldfarb-Shanno) algorithm in combination with the Wolfe linear search \cite{arxivRNNchaos}. 
The training is performed until the learning error (MSE) is lower than a chosen threshold value ($10^{-2}$).

\begin{table}[!t]
\renewcommand{\arraystretch}{1.3}
\caption{Results of some iteration functions learning, using different recurrent MLP architectures}
\label{results}
\centering
\begin{scriptsize}
\begin{tabular}{|c||c|c||c|c|}
\hline 
 & \multicolumn{4}{c|}{One hidden layer} \\
\cline{2-5}
 & \multicolumn{2}{c||}{8 neurons} & \multicolumn{2}{|c|}{10 neurons} \\
\hline
Function & Mean & Success & Mean & Success \\
 & epoch & rate & epoch & rate \\
\hline
$f_1$ & 82.21 & 100\% & 73.44 & 100\% \\
$f_2$ & 76.88 & 100\% & 59.84 & 100\% \\
$g_1$ & 36.24 & 100\% & 37.04 & 100\% \\
\hline
\hline
 & \multicolumn{4}{c|}{Two hidden layers: 8 and 4 neurons} \\
\cline{2-5}
 & \multicolumn{2}{c|}{Mean epoch number} & \multicolumn{2}{|c|}{Success rate} \\
\hline
$f_1$ & \multicolumn{2}{c|}{203.68} & \multicolumn{2}{c|}{76\%} \\
$f_2$ & \multicolumn{2}{c|}{135.54} & \multicolumn{2}{c|}{96\%} \\
$g_1$ & \multicolumn{2}{c|}{72.56} & \multicolumn{2}{c|}{100\%} \\
\hline
\end{tabular}
\end{scriptsize}
\end{table}

Table~\ref{results} gives for each considered neural network the mean number of epochs needed to learn one iteration in their ICs, and a success rate which reflects a successful training in less than 1000 epochs. 
Both values are computed considering 25 trainings with random weights and biases initialization. 
These results highlight several points \cite{arxivRNNchaos}. 
Firstly, the two hidden layer structure seems to be quite inadequate to learn chaotic behaviors.
Secondly, training networks so that they behave chaotically seems to be difficult for these simplistic functions only iterated one time, since they need in average more epochs to be correctly trained. 
However, the correctness of this point needs to be further investigated. 

At this point we can only claim that it is not completely evident that computational intelligence as neural networks are able to predict, with a good accuracy, protein folding. 
To reinforce this belief, tools optimized to chaotic behaviors must be found -- if such tools exist.
Similarly, there should be a link between the training difficulty and the ``quality'' of the disorder induced by a chaotic iteration function (their constants of sensitivity, expansivity, etc.), and this link must be found.

\section{Conclusion}
\label{Conclusion}\vspace{-4pt}

In this paper the topological dynamic of protein folding is evaluated.
More precisely, it is regarded whether this folding process is predictable or not.
It is achieved to determine if it is reasonable to think that computational intelligence as neural networks are able to predict the 3D shape of an amino acids sequence.
It is mathematically proven, by using two different ways, that protein folding in 2D hydrophobic-hydrophilic (HP) square lattice model is chaotic according to Devaney.

Consequences both for structure prediction and biology are then outlined.
In particular, the first comparison of the learning by neural networks of a chaotic behavior on the one hand, and of a more natural dynamic on the other hand, are outlined.
Obtained results tend to show that such chaotic behaviors are more difficult to learn than non-chaotic ones.
It is not our pretension to claim that it is impossible to predict chaotic behaviors as protein folding with computational intelligence. 
Our opinion is just that this important point must now be regarded with attention.

In future work the dynamical behavior of the protein folding process will be more deeply studied, by using topological tools as expansivity, topological mixing, Knudsen and Li-Yorke notions of chaos, topological entropy, etc.
The quality and intensity of this chaotic behavior will then be evaluated.
Consequences both on folding prediction and on biology will then be regarded in detail.
Other molecular or genetic dynamics will be investigate by using mathematical topology, and other chaotic behaviors will be looked for (as neurons in the brain).
More specifically, various tools taken from the field of computational intelligence will be studied to determine if some of these tools are capable to predict behaviors that are chaotic.
It is highly possible that prediction depends both on the tool and on the chaos quality.
Moreover, the study presented in this paper will be extended to high resolution 3D models.
Impacts of the chaotic behavior of the protein folding process in biology will be regarded.
Finally, the links between this established chaotic behavior and stochastic models in gene expression, mutation, or in Evolution, will be investigated.

\begin{small}
\bibliographystyle{plain}
\bibliography{mabase}
\end{small}



%

\end{document}